\newtheorem{thm}{Theorem}
\newtheorem{lem}[thm]{Lemma}
\newtheorem{cor}[thm]{Corollary}
\title{The Shortest Path with Increasing Chords in a Simple Polygon}
\author{Mart Hagedoorn}{TU Dortmund, Germany}{mart.hagedoorn@tu-dortmund.de}{}{}
\author{Irina Kostitsyna}{TU Eindhoven, The Netherlands}{i.kostitsyna@tue.nl}{}{}
\authorrunning{M. Hagedoorn and I. Kostitsyna} 
\keywords{self-approaching paths, paths with increasing chords}
\begin{document}

\maketitle

\begin{abstract}
We study the problem of finding the shortest path with \emph{increasing chords} in a simple polygon.
A path has increasing chords if and only if for any points $a, b, c$, and $d$ that lie on the path in that order, $|ad| \geq |bc|$.
In this paper we show that the shortest path with increasing chords is unique and present an algorithm to construct it.
\end{abstract}

\section{Introduction}

\begin{figure}
    \centering
    \includegraphics[page=1]{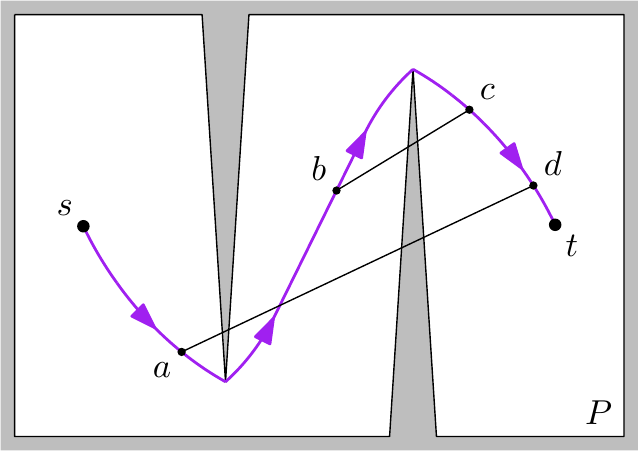}
    \captionof{figure}{The shortest $s$-$t$ path with increasing chords inside simple polygon $P$.}
    \label{fig:increasing_example}
\end{figure}

An $s$-$t$ path $\sigma$ has \emph{increasing chords} if and only if $\sigma$ is a directed path from some point $s$ to $t$ and for any points $a, b, c$, and $d$ that appear in that order on $\sigma$, the Euclidean distance between $a$ and $d$ is greater or equal to the Euclidean distance between $b$ and $c$ \cite{arcs_with_increasing_chords}.
Figure~\ref{fig:increasing_example} shows an example of a shortest $s$-$t$ path with increasing chords in simple polygon $P$.
Furthermore, a path has increasing chords if and only if the path is \emph{self-approaching} in both directions.
A path is self-approaching if, when traversing the path the Euclidean distance to any point on the remainder of the path is not increasing.

Paths with increasing chords are closely related to beacon and greedy routing applications. Path finding with beacon and greedy routing often results in paths that are directed curves such that the distance to the destination is never increasing \cite{beacon-routing, biro2011beacon, greedy_routing_sensor_networks}. 
Paths with this property are called \emph{radially monotone paths}. 
If a path has increasing chords, then every subpath of that path is radially monotone in both directions.

Furthermore, self-approaching paths and paths with increasing chords have the property that the length of the path is bounded in comparison with the Euclidean distance between the start and destination of the path. This bounding factor of paths with increasing chords is only $2\pi / 3$, whereas the bounding factor of self-approaching paths is approximately $5.3331$ \cite{curves_with_increasing, sa}.

The results of this paper are further discussed in the master's thesis of Hagedoorn.

\section{Preliminaries}

An $s$-$t$ path $\zeta$ is a directed curve that starts in point $s$ and ends in point $t$. 
Moreover, path $\zeta$ must lie entirely inside of polygon $P$, i.e. $\zeta \subseteq P$. 
We define $\zeta(p, q)$ to be the subpath of $\zeta$ that lies between points $p$ and $q$.
Paths in a Euclidean space can make smooth turns and sharp turns. In order to differentiate between these types of turns, we use the standard notation of a \emph{bend point} (Fig.~\ref{fig:bend-normal}). A \emph{bend point} $b$ of a piecewise smooth curve $\zeta$ is a point where the first derivative of $\zeta$ is discontinuous.

\begin{figure}
    \centering
    \includegraphics[page=4]{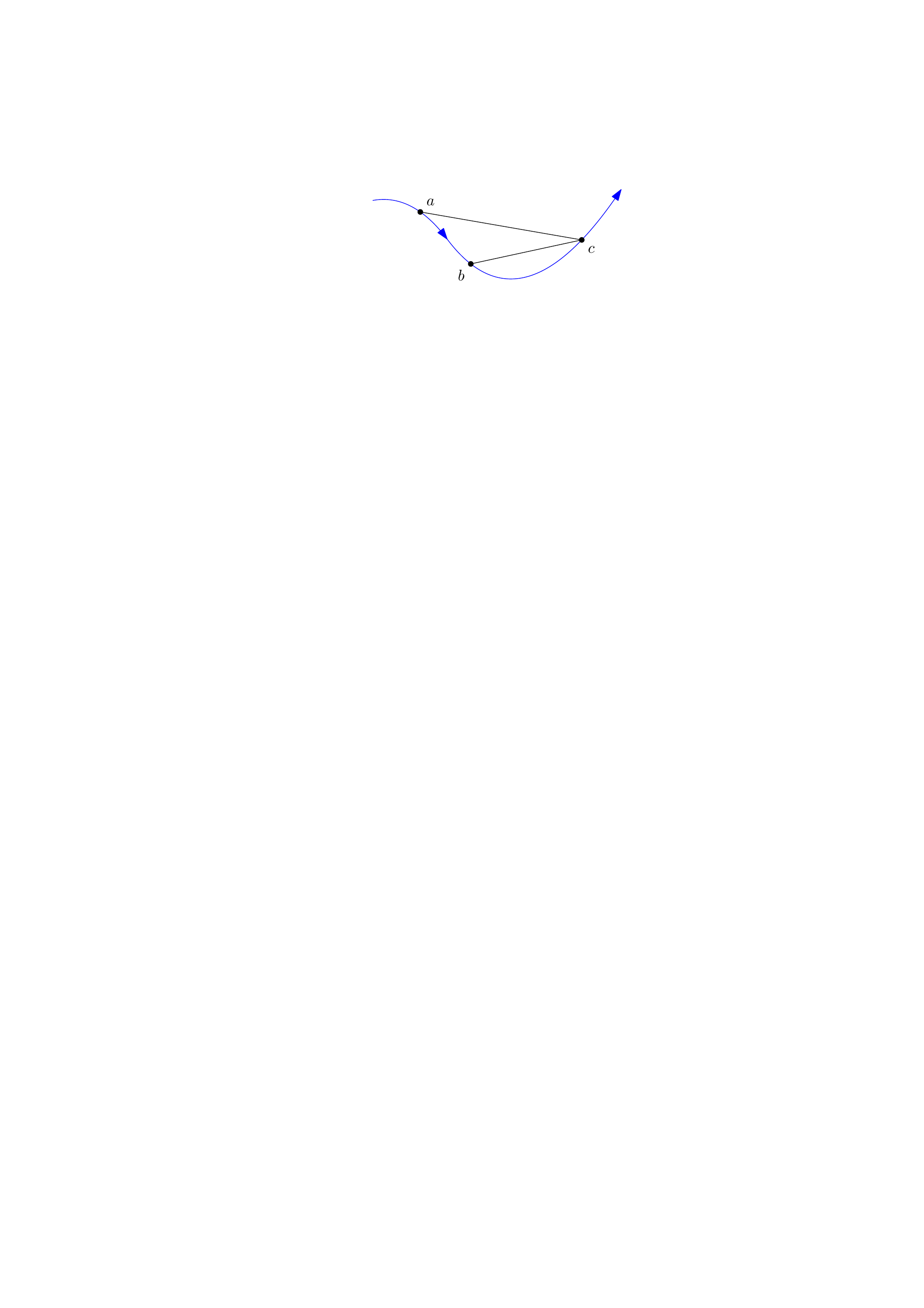}
    \captionof{figure}{The normal $h_p$ of $p$, where $p$ is a bend point.}
    \label{fig:bend-normal}
\end{figure}

Furthermore, we define a \emph{normal} $h_p$ to path $\zeta$ at point $p$ to be the line through $p \in \zeta$ such that $h_p$ is perpendicular to the tangent of $\zeta$ in $p$. If $p$ is a bend point, then we use the normal definition as proposed by Icking et al. \cite{sa}. This definition states that the normal to $\zeta$ at $p$ is the set of lines that are included in the double wedge between the perpendicular lines to the tangents of the smooth pieces meeting at $p$ (Fig. \ref{fig:bend-normal}).

\begin{figure}
    \centering
    \includegraphics[page=2]{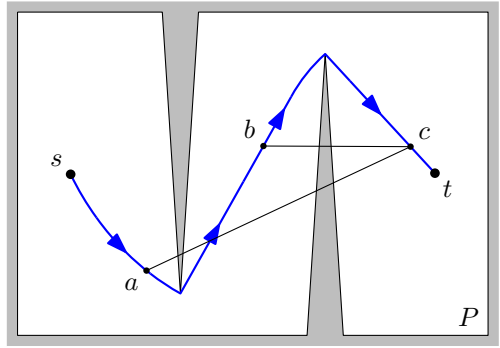}
    \captionof{figure}{The shortest self-approaching $s$-$t$ path inside simple polygon $P$.}
    \label{fig:sa_example}
\end{figure}

\subsection{Self-approaching paths}

Self-approaching paths were first described by Icking et al. \cite{sa}. A path $\pi$ is self-approaching if and only if for any points $a, b$, and $c$ that appear on $\pi$ in that order, the Euclidean distance between $a$ and $c$ is greater or equal to the Euclidean distance between $b$ and $c$ (Fig.~\ref{fig:sa_example}). Furthermore, Icking et al. \cite{sa} showed the following \emph{normal property} of self-approaching paths. 

\begin{lem}[\cite{sa}]\label{lem:sa_normal_prop}
An $s$-$t$ path $\pi$ is self-approaching if and only if the normal to $\pi$ at any point $p \in \pi$ does not intersect the subpath $\pi(p, t)$.
\end{lem}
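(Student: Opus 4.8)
The plan is to prove both implications through a single, more tractable reformulation: $\pi$ is self-approaching if and only if, for every point $c\in\pi$, the distance function $g_c(x)=|xc|$ is non-increasing as $x$ traverses the subpath $\pi(s,c)$. This equivalence is immediate, since the three-point inequality $|ac|\ge|bc|$ for all $a,b,c$ occurring in that order is literally the assertion that each $g_c$ is non-increasing on $\pi(s,c)$. At a point $x$ where the tangent is defined, $g_c$ is differentiable with $g_c'(x)=(x-c)\cdot\tau_x/|xc|$, where $\tau_x$ is the forward unit tangent; hence $g_c$ is non-increasing at $x$ exactly when $(c-x)\cdot\tau_x\ge 0$, i.e.\ when $c$ lies in the \emph{closed} half-plane $H_x^{+}$ bounded by the normal $h_x$ and containing the forward direction. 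I would fix this half-plane dictionary at the outset and carry out the rest of the argument in its terms.

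For the direction from the normal property to self-approaching, assume that for every $p$ the normal $h_p$ does not cross $\pi(p,t)$, which I read as the future subpath lying in the closed half-plane $H_p^{+}$. Fix any $c\in\pi$ and any smooth point $y\in\pi(s,c)$; since $c$ lies in the future of $y$, the hypothesis gives $c\in H_y^{+}$, hence $g_c'(y)=(y-c)\cdot\tau_y/|yc|\le 0$. As $g_c$ is continuous and piecewise smooth, a derivative that is non-positive at every smooth point forces $g_c$ to be non-increasing on all of $\pi(s,c)$. By the reformulation this is exactly self-approaching.

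For the converse, assume $\pi$ is self-approaching and fix $p$. For every $c\in\pi(p,t)$, the function $g_c$ is non-increasing on $\pi(s,c)$ and $p$ is an interior point of that subpath, so the forward difference quotients of $g_c$ at $p$ are $\le 0$; passing to the limit, the forward one-sided derivative satisfies $(p-c)\cdot\tau_p^{+}\le 0$, i.e.\ $c\in H_p^{+}$. Since this holds for every later point $c$, the entire future $\pi(p,t)$ lies in the closed half-plane $H_p^{+}$, which is the normal property. Working with one-sided derivatives yields the \emph{non-strict} containment, exactly as required by extremal self-approaching curves (such as a half-circle), whose future touches the boundary normal at a single point; this is why the statement must be read as non-crossing rather than as strict disjointness.

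The place that needs genuine care --- and that I expect to be the main obstacle --- is the bend points, where the tangent and hence the single normal $h_p$ are replaced by the double-wedge normal of Icking et al. There the one-sided derivative argument must be run separately for the incoming and outgoing unit tangents $\tau_p^{-}$ and $\tau_p^{+}$: self-approaching forces a later point $c$ into the forward half-planes of \emph{both} one-sided tangents, and the intersection of these two half-planes is precisely the complement of the double wedge, so ``the normal does not meet the future'' and ``the future avoids the double wedge'' coincide. Reconciling this two-sided bookkeeping with the wedge definition, together with supplying the regularity (absolute continuity of $g_c$) needed to deduce monotonicity from the sign of the derivative almost everywhere, is the technical heart of the proof; once these are in place both implications follow from the half-plane dictionary above.
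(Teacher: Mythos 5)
The paper does not prove this lemma at all: it is imported verbatim from Icking et al.\ \cite{sa}, so there is no in-paper argument to compare yours against. Your proof is essentially the standard one (and, as far as I can tell, the one in the cited source): reformulate self-approaching as monotonicity of the distance functions $g_c(x)=|xc|$ along $\pi(s,c)$, compute $g_c'(x)=(x-c)\cdot\tau_x/|xc|$ at smooth points, and translate the sign condition into containment of the future in the closed forward half-plane of the normal. Both implications go through as you describe, including the correct reading of ``does not intersect'' as non-crossing (the future necessarily meets $h_p$ at $p$ itself, and extremal curves touch it again), and the appeal to absolute continuity of $g_c$ to pass from a sign condition on the derivative to monotonicity is the right regularity hypothesis to invoke for piecewise smooth curves.

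One small imprecision at the bend points: the intersection of the two forward half-planes of $\tau_p^-$ and $\tau_p^+$ is \emph{not} the complement of the double wedge --- that complement is the union of two opposite cones, the forward cone (which is the intersection you describe) and the backward cone (the intersection of the two backward half-planes). So ``future in both forward half-planes'' implies ``future avoids the double wedge,'' but the converse needs an extra remark: the future is a connected curve leaving $p$ in direction $\tau_p^+$, hence it enters the forward cone, and since the forward and backward cones meet only at the apex $p$, a future that avoids the double-wedge region cannot reach the backward cone. With that one-line connectedness argument added, the two formulations do coincide and your proof is complete.
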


\begin{figure}
    \centering
    \includegraphics[page=3]{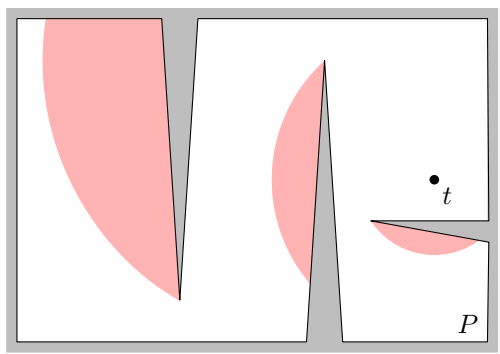}
    \caption{Dead region $\mathcal{D}_t$ (red) for some point $t$ in simple polygon $P$.}
    \label{fig:dead_regions}
\end{figure}

Bose et al. \cite{self-approaching_simple_polygons} proposed an algorithm for finding the shortest \emph{self-approaching} $s$-$t$ path in a simple polygon $P$ (Fig.~\ref{fig:sa_example}). 
Their algorithm uses the notion of \emph{dead regions}. A dead region $\mathcal{D}_t$ for point $t$ is a set of points such that for any point $s \in \mathcal{D}_t$, no self-approaching $s$-$t$ path exists (Fig. \ref{fig:dead_regions}). They proved that the shortest self-approaching $s$-$t$ path $\pi$ is the $s$-$t$ geodesic in $P \setminus \mathcal{D}_t$, i.e.\ the set difference of the polygon and the dead region for point $t$.
Therefore, the shortest self-approaching $s$-$t$ path consists of straight line segments and segments that are boundaries of $\mathcal{D}_t$.
Moreover, they showed that if $v$ is a point on $\pi$ and $v$ lies on a boundary of $\mathcal{D}_t$, then the normal $h_v$ to $\pi$ at $v$ must \emph{touch} the subpath $\pi(v, t)$. 
A line $\ell$ touches a curve $\gamma$ if there exists some point $p$ such that $\ell \cap \gamma = \{p\}$ and for the normal $h_p$ to $\gamma$ at point $p$, $\ell \in h_p$ or $\ell = h_p$ when $p$ is a bend point or not, respectively.

The equations that define boundaries of $\mathcal{D}_t$ are transcendental equations and likely cannot be solved or evaluated analytically \cite{self-approaching_simple_polygons}. Therefore, the shortest self-approaching path can only be found if we assume these equations can be solved or that an approximation of the path will be calculated.

\subsection{Paths with increasing chords}

As mentioned before, a path $\sigma$ has increasing chords if and only if $\sigma$ is self-approaching in both directions. Furthermore, for paths with increasing chords we can again define the normal property:

\begin{lem}\label{lem:ic_normal_prop}
An $s$-$t$ path $\sigma$ has increasing chords if and only if the normal to $\sigma$ at any point $p \in \sigma$ does not intersect the subpaths $\sigma(s, p)$ and $\sigma(p, t)$.
\end{lem}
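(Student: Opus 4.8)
The plan is to obtain this as an immediate consequence of Lemma~\ref{lem:sa_normal_prop} together with the characterization, noted just above the statement, that a path has increasing chords if and only if it is self-approaching in both directions. Concretely, let $\sigma^{\leftarrow}$ denote the path $\sigma$ traversed in the reverse direction, i.e.\ the $t$-$s$ path consisting of the same point set as $\sigma$. By definition, $\sigma$ has increasing chords exactly when both $\sigma$ (as an $s$-$t$ path) and $\sigma^{\leftarrow}$ (as a $t$-$s$ path) are self-approaching. I would therefore prove the lemma by applying Lemma~\ref{lem:sa_normal_prop} separately to each of these two directions and then intersecting the resulting conditions.

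The first application, to $\sigma$ in the forward direction, states that $\sigma$ is self-approaching if and only if the normal at every point $p$ does not intersect $\sigma(p,t)$. The second application, to $\sigma^{\leftarrow}$, states that $\sigma^{\leftarrow}$ is self-approaching if and only if the normal at every point $p$ does not intersect the sub path running from $p$ to the endpoint $s$ of $\sigma^{\leftarrow}$; since the point set of that sub path equals the point set of $\sigma(s,p)$, this is the condition that the normal at $p$ does not intersect $\sigma(s,p)$. Conjoining the two biconditionals yields exactly the claim: $\sigma$ has increasing chords if and only if, for every $p\in\sigma$, the normal to $\sigma$ at $p$ avoids both $\sigma(s,p)$ and $\sigma(p,t)$.

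For this reduction to be sound, the essential step—and the one I expect to require the most care—is verifying that the \emph{normal to $\sigma$ at $p$} is the same object whether $\sigma$ is viewed in the forward or the reverse direction, so that ``the normal'' appearing in the two applications of Lemma~\ref{lem:sa_normal_prop} genuinely coincides. At a smooth point this is straightforward: reversing the traversal negates the tangent vector but leaves the tangent \emph{line} unchanged, and the normal is defined as the line perpendicular to that tangent, which is direction-independent. At a bend point one must check the double-wedge definition of Icking et al.: the normal is the set of lines inside the double wedge bounded by the two perpendiculars to the one-sided tangents meeting at $p$. Reversing the direction merely swaps the roles of the two smooth pieces and hence the two bounding perpendiculars, which produces the identical double wedge; thus the normal set is invariant under reversal here as well.

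Finally, I would note that ``intersect'' is a purely set-theoretic relation between the normal and a sub path, so replacing a sub path by the same point set traversed in the opposite direction does not change whether an intersection occurs; this justifies identifying $\sigma^{\leftarrow}$'s sub path to $s$ with $\sigma(s,p)$ in the argument above. With the normal-invariance step established, the rest is a direct conjunction of the two instances of Lemma~\ref{lem:sa_normal_prop}, and no further geometric work is needed.
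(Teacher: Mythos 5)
Your proof is correct and follows essentially the same route as the paper: both reduce the claim to two applications of Lemma~\ref{lem:sa_normal_prop}, one for each direction of traversal. Your additional check that the normal (including the double-wedge normal at bend points) is invariant under reversal of the path is a detail the paper leaves implicit, but it does not change the argument.
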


\begin{proof}
A path $\sigma$ with increasing chords is self-approaching from $s$ to $t$ and from $t$ to $s$.
Therefore, at any point $p \in \sigma$ the normal properties of the self-approaching $s$-$t$ and $t$-$s$ paths state that $\sigma(s, p)$ and $\sigma(p, t)$ cannot cross the normal to $\sigma$ at point $p$.
\end{proof}

\begin{figure}
  \centering
  \includegraphics[page=3]{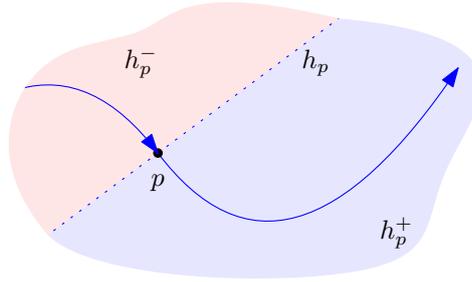}
  \captionof{figure}{Example figure showing the normal $h_p$ at point $p$ and the corresponding half planes $h_p^-$ and $h_p^+$.}
  \label{fig:half_planes}
\end{figure}

This property above can be reformulated in terms of the \emph{negative} and \emph{positive half-plane}. Using the definition from Bose et al. \cite{self-approaching_simple_polygons}, the positive half-plane $h_p^+$ of path $\zeta$ at point $p$ is the closed half-plane that is defined by normal $h_p$ to $\zeta$ and contains 
at point $p$ which is congruent to the direction of $\zeta$ at $p$. Analogously, the negative half-plane $h_p^-$ is congruent with the opposite direction of $\zeta$ at $p$ (Fig.~\ref{fig:half_planes}).

\begin{cor} \label{lem:half_plane}
An $s$-$t$ path $\sigma$ has increasing chords if and only if, for any line $h$ normal to $\sigma$ at point $p \in \sigma$, the subpath $\sigma(s, p)$ lies completely in the negative half-plane $h_p^-$ and the subpath $\sigma(p, t)$ lies completely in the positive half-plane $h_p^+$.
\end{cor}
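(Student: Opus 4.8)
The plan is to read Corollary~\ref{lem:half_plane} as a sign-refined version of Lemma~\ref{lem:ic_normal_prop} and to derive it from that lemma. Note first that the normal $h$ at $p$ always meets the subpaths $\sigma(s,p)$ and $\sigma(p,t)$ at their common endpoint $p$, so the condition of Lemma~\ref{lem:ic_normal_prop} can only mean that the subpaths do not \emph{cross} $h$, i.e.\ do not pass into its opposite open half-plane. Under this reading the corollary is almost a restatement: a connected subpath that does not cross the line $h$ lies entirely in one of the two closed half-planes bounded by $h$, and the only genuine content is to identify \emph{which} half-plane each subpath occupies.

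For the direction from the half-plane condition to increasing chords, assume $\sigma(s,p)\subseteq h_p^-$ and $\sigma(p,t)\subseteq h_p^+$ for every $p$ and every normal $h$. Then neither subpath enters the open half-plane on the far side of $h$, so neither crosses $h$; by Lemma~\ref{lem:ic_normal_prop} the path $\sigma$ has increasing chords.

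For the converse, assume $\sigma$ has increasing chords, and fix $p$ and a normal $h$ at $p$. By Lemma~\ref{lem:ic_normal_prop} neither $\sigma(s,p)$ nor $\sigma(p,t)$ crosses $h$; since each is connected, each lies in a single closed half-plane of $h$. I would fix the sign by a local argument at $p$: by definition $h_p^+$ is the half-plane containing the forward direction of $\sigma$ at $p$ and $h_p^-$ the one containing the reverse direction. A point of $\sigma(p,t)$ just after $p$ leaves along the forward tangent, which points into $h_p^+$, forcing $\sigma(p,t)\subseteq h_p^+$; symmetrically $\sigma(s,p)$ reaches $p$ along the reverse direction, which points into $h_p^-$, forcing $\sigma(s,p)\subseteq h_p^-$.

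The main obstacle is making this local step rigorous at a bend point $p$, where $\sigma$ has two distinct tangents and the admissible normal $h$ ranges over the entire double wedge of Icking et al. One must check that, for \emph{every} line $h$ in this wedge, the incoming and outgoing forward directions both lie in $h_p^+$ (equivalently, the direction along which $\sigma(s,p)$ reaches $p$ lies in $h_p^-$), so that the sign assignment is consistent for each admissible normal. This in turn uses that a path with increasing chords cannot turn by more than a right angle at a bend — a fact I would verify by applying the self-approaching inequality $|ac|\ge|bc|$ to a triple $a\prec b\prec c$ with $b=p$ and letting $a$ and $c$ approach $p$ — after which the required containment reduces to an elementary comparison of the two tangent angles with the half-planes determined by $h$.
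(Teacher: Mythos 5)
Your proposal is correct and follows essentially the same route as the paper, which states this corollary without any proof as a direct reformulation of Lemma~\ref{lem:ic_normal_prop}. Your write-up simply makes explicit the details the paper leaves implicit — reading ``does not intersect'' as ``does not cross,'' identifying which closed half-plane each connected subpath occupies via the tangent direction at $p$, and checking consistency at bend points using the fact that a path with increasing chords turns by at most a right angle — all of which are sound.
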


\section{Shortest Path with Increasing Chords}

In this section we first show that a shortest $s$-$t$ path with increasing chords in a simple polygon is unique. Therefore, we only need to search for one shortest $s$-$t$ path with increasing chords.
The following two proofs are extensions from the analogous propositions about self-approaching paths given by Bose et al.\ \cite{self-approaching_simple_polygons}.

\begin{figure}
    \centering
    \includegraphics[page=30]{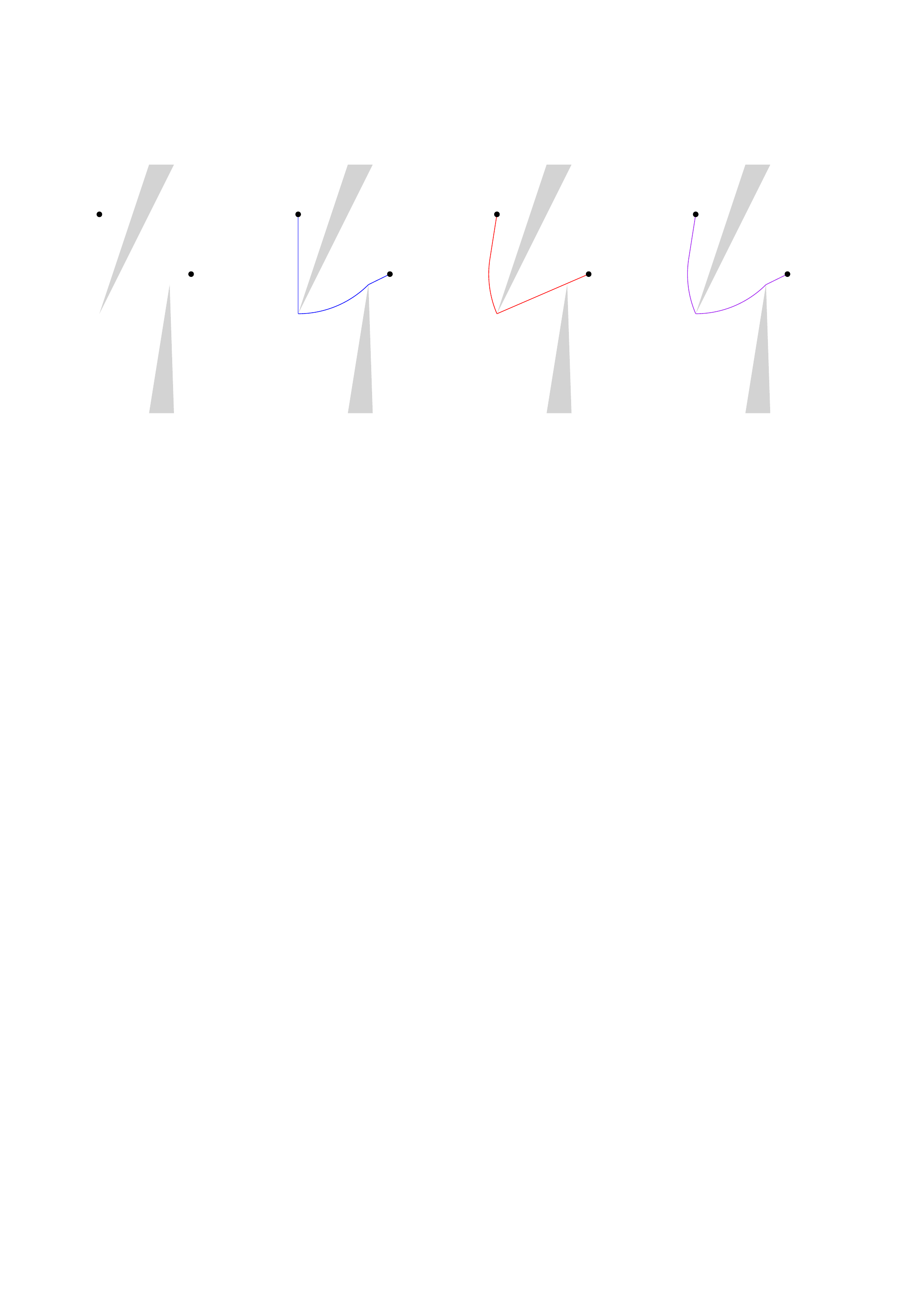}
    \caption{Geodesic $\gamma$ (purple) that lies between paths $\sigma_1$ (blue) and $\sigma_2$ (red) with increasing chords.}
    \label{fig:between}
\end{figure}

\begin{restatable}{lem}{lempathbetween}\label{geodesic}
A geodesic path $\gamma$ between two distinct paths with increasing chords $\sigma_1$ and $\sigma_2$ also has increasing chords (Fig.~\ref{fig:between}).
\end{restatable}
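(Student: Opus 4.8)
The plan is to establish the normal-property characterization of Lemma~\ref{lem:ic_normal_prop} (equivalently, the half-plane reformulation of Corollary~\ref{lem:half_plane}) directly for $\gamma$. Both the geodesic property and the increasing-chords property are invariant under reversing the orientation of all the paths involved: since a path has increasing chords exactly when it is self-approaching in both directions, traversing $\sigma_1$ and $\sigma_2$ from $t$ to $s$ yields two paths with increasing chords, and the reversal of $\gamma$ is precisely the geodesic between them. Hence it suffices to prove the forward half of the statement, namely that for every $p \in \gamma$ the normal $h_p$ does not meet the subpath $\gamma(p,t)$ and $\gamma(p,t) \subseteq h_p^+$; applying the same argument to the reversed configuration then yields $\gamma(s,p) \subseteq h_p^-$.

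First I would recall the structure of a geodesic inside the region $R$ bounded by $\sigma_1$ and $\sigma_2$: it is a concatenation of maximal straight segments through the interior of $R$ and maximal arcs that coincide with a piece of $\sigma_1$ or of $\sigma_2$, where $\gamma$ joins and leaves the bounding paths tangentially. I would then fix $p$ and split into cases. If $p$ lies in the relative interior of a straight segment, let $b$ be the forward endpoint of that segment; since $\gamma$ leaves the segment tangentially, $h_b$ is perpendicular to the segment and hence parallel to $h_p$, with $h_b^+ \subseteq h_p^+$. Writing $\gamma(p,t) = [p,b] \cup \gamma(b,t)$, the segment $[p,b]$ runs into $h_p^+$ and, once the statement is known at $b$, $\gamma(b,t) \subseteq h_b^+ \subseteq h_p^+$, so the straight case reduces to the case where $p$ is the tangency point on an arc. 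If $p$ is a bend point (occurring where $\gamma$ wraps a convex corner of the boundary), I would use the double-wedge normal and check the two extreme perpendiculars, each of which reduces to an adjacent smooth piece. Thus everything reduces to the case in which $p$ lies on an arc of $\gamma$ coinciding with, say, $\sigma_1$.

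In that remaining case, the tangent of $\gamma$ at $p$ equals the tangent of $\sigma_1$ at $p$, so $h_p$ is exactly the normal to $\sigma_1$ at $p$; because $\sigma_1$ has increasing chords, Lemma~\ref{lem:ic_normal_prop} and Corollary~\ref{lem:half_plane} give that $h_p$ meets $\sigma_1$ only at $p$ and that $\sigma_1(p,t) \subseteq h_p^+$. I would then argue by contradiction, exploiting that $\gamma$ is a shortest path. Suppose $\gamma(p,t)$ crosses $h_p$, and let $q$ be the first such crossing after $p$; then $\gamma(p,q)$ leaves $p$ into the open positive half-plane and returns to $h_p$ only at $q$, so replacing $\gamma(p,q)$ by the straight chord $[p,q] \subseteq h_p$ would be strictly shorter. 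This contradicts the minimality of $\gamma$, provided $[p,q]$ stays inside $R$.

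The containment $[p,q] \subseteq R$ is exactly the step I expect to be the main obstacle, and it is where the increasing-chords hypothesis on the two bounding paths is used. Because $h_p$ meets $\sigma_1$ only at $p$, the chord of $h_p$ emanating from $p$ into $R$ cannot leave $R$ through $\sigma_1$, so it remains in $R$ until it first reaches $\sigma_2$; combined with the fact that $\gamma$ is sandwiched between $\sigma_1$ and $\sigma_2$, this should force the first return point $q$ to lie on that chord and hence $[p,q] \subseteq R$. The delicate points to pin down are that the first crossing $q$ indeed lies on the component of $h_p \cap R$ incident to $p$ (which requires ruling out that $\gamma(p,q)$ escapes to a far component of $h_p \cap R$ while staying in the positive half-plane), and the bookkeeping needed when $\gamma(p,q)$ itself runs along a boundary arc, where the shortcut must be taken on the appropriate side of $h_p$.
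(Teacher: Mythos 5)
Your reduction to the case where $p$ lies on an arc of $\gamma$ coinciding with, say, $\sigma_1$ matches the paper's case structure (the paper likewise handles bitangent segments by transporting the half-plane from an endpoint of the bitangent, and bend points via the double wedge). But in that crucial remaining case you take a genuinely different route --- a local exchange argument along the line $h_p$ --- and the step you yourself flag as the main obstacle is a real gap, not a formality. Your shortcut argument needs (i) that the first return point $q$ of $\gamma(p,t)$ to $h_p$ lies on the same component of $h_p \cap R$ as $p$, and (ii) that $[p,q] \subseteq R$. Knowing that $h_p$ meets $\sigma_1$ only at $p$ does not give you this: $\sigma_2$ may cross $h_p$ several times, so $h_p^+ \cap R$ can have a boundary consisting of several disjoint intervals of $h_p$, and $\gamma(p,q)$ can stay in $h_p^+ \cap R$ while returning to $h_p$ on an interval separated from $p$ by a portion of $h_p$ outside $R$. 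In that situation the chord $[p,q]$ leaves $R$ and no length is saved by your replacement, so the contradiction with minimality of $\gamma$ does not go through. As written, the argument establishes nothing in exactly the configurations where the claim is least obvious.

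The missing idea is the one the paper uses to bypass the exchange argument entirely: a geodesic between two $s$-$t$ paths lies in the convex hull of each bounding chain, and more specifically $\gamma(p,t) \subseteq \textit{CH}(\sigma_1(p,t))$ when $p \in \gamma \cap \sigma_1$. Since $\sigma_1$ has increasing chords, $\sigma_1(p,t) \subseteq h_p^+$, and because $h_p^+$ is convex it contains $\textit{CH}(\sigma_1(p,t))$ and hence $\gamma(p,t)$ --- no first-crossing analysis, no case distinction on the components of $h_p \cap R$, and no bookkeeping for portions of $\gamma$ running along the boundary. If you want to keep your exchange-based structure you would need to prove the convex-hull containment anyway (or an equivalent separation statement), at which point the half-plane property follows directly and the shortcut argument becomes superfluous. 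Your symmetry observation (reversing orientation to get the negative half-plane for free) is sound and is a mild economy over the paper, which argues both directions explicitly.
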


\begin{proof}
We will use the fact that the geodesic must lie in both convex hulls of both paths, i.e.\ $\gamma \subseteq \textit{CH}(\sigma_1)$ and $\gamma \subseteq \textit{CH}(\sigma_2)$.
Furthermore, any point $p \in \gamma$ either lies on one of the paths $\sigma_1$ or $\sigma_2$ or on a straight line that is bitangent to $\sigma_1$, $\sigma_2$ or both $\sigma_1$ and $\sigma_2$.

First, we consider the case that $p$ is not a bend point but lies on a smooth section of $\sigma_1$ or $\sigma_2$.
Let, w.l.o.g., $p \in \sigma_1$. The positive half-plane $h_p^+$ of the normal to $\sigma_1$ at $p$ contains subpath $\sigma_1(p, t)$ and the negative half-plane $h_p^-$ of the normal at $p$ contains subpath $\sigma_1(s, p)$. Therefore, $h_p^+$ also contains the convex hull of $\sigma_1(p, t)$ and $h_p^-$ contains the convex hull of $\sigma_1(s, p)$. Hence, $h_p^+$ contains the subpath $\gamma(p, t)$ and $h_p^-$ contains the subpath $\gamma(s, p)$ of the geodesic.

Now let $p$ be a bend point lying on $\sigma_1$. Subpath $\sigma_1(p, t)$ must be fully contained in the two positive half-planes, which are defined by the normals of the smooth pieces of $\sigma_1$ meeting at $p$. Analogously $\sigma_1(s, p)$ is fully contained in the two negative half-planes.
The two normals of the geodesic path at this point must also lie in between the two normals to the boundary path. 
Thus, the intersection of the two positive half-planes of the normals to the geodesic contains the convex hull of the subpath $\sigma_1(p, t)$, and, therefore, the rest of the geodesic path $\gamma(p, t)$. Analogously, the intersection of the two negative half-planes of the normals to the geodesic contains the convex hull of the subpath $\sigma_1(s, p)$, and, therefore, the geodesic subpath $\gamma(s, p)$.

Finally, let $p$ lie on a segment of $\gamma$ that is bitangent to $\sigma_1$, $\sigma_2$ or both $\sigma_1$ and $\sigma_2$. Let us consider one of the end-points $p^*$ of this bitangent. Assume, that $p^* \in \gamma(p, t)$. The normal to $\gamma$ at $p$ is parallel to the normal to $\gamma$ at $p^*$. By one of the cases considered above, the positive half-plane at $p^*$ will contain $\gamma(p^*, t)$, and, therefore, the positive half-plane of the normal to $\gamma$ at $p$ will also contain the subpath $\gamma(p, t)$. If $p^* \in \gamma(s, p)$, then we can follow analogues steps to show that $\gamma(s, p)$ lies in the negative half-plane of $p$.

Thus, $\gamma$ has increasing chords.
\end{proof}

Using Lemma~\ref{geodesic} we can prove the following Theorem.

\begin{thm}\label{unique}
A shortest $s$-$t$ path with increasing chords in a simple polygon is unique.
\end{thm}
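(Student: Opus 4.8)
The plan is to prove uniqueness by contradiction, using Lemma~\ref{geodesic} as the key tool. Suppose there are two distinct shortest $s$-$t$ paths with increasing chords, $\sigma_1$ and $\sigma_2$, both of the same (minimal) length $L$. Since both share endpoints $s$ and $t$ and are distinct, I would form the geodesic path $\gamma$ between them inside $P$. By Lemma~\ref{geodesic}, $\gamma$ itself has increasing chords, so $\gamma$ is a legitimate competitor in the class of paths we are minimizing over.

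**The central step** is to argue that $\gamma$ is strictly shorter than both $\sigma_1$ and $\sigma_2$, which contradicts their minimality. The idea is that the geodesic between two distinct curves sharing the same endpoints is a taut-string shortest connection, and since $\gamma \subseteq \textit{CH}(\sigma_1) \cap \textit{CH}(\sigma_2)$, it cannot be longer than either path; moreover, because $\sigma_1 \neq \sigma_2$, the ``squeezing'' of $\gamma$ between them forces $|\gamma| < |\sigma_1|$ or $|\gamma| < |\sigma_2|$ strictly. First I would establish that $\gamma$ lies weakly below (in the taut-string sense) both boundary paths, so $|\gamma| \leq \min(|\sigma_1|, |\sigma_2|) = L$. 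Then I would show the inequality is strict: wherever $\sigma_1$ and $\sigma_2$ diverge, the geodesic takes a shortcut through the enclosed region, strictly reducing length there.

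**To make the strictness rigorous,** I would locate a point where $\sigma_1$ and $\sigma_2$ genuinely separate and argue the geodesic ``cuts the corner.'' Because the two paths share endpoints but differ somewhere, they bound a nonempty region, and a shortest path through the funnel between them must be strictly shorter than tracing along either boundary whenever that boundary is not already the geodesic. Since at least one of $\sigma_1, \sigma_2$ must be non-geodesic on the divergent portion (they can't both coincide with $\gamma$ there while being distinct from each other), we get $|\gamma| < L$, contradicting that $L$ was the minimum.

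**The main obstacle** I anticipate is handling the degenerate configurations cleanly: in particular, ruling out the possibility that $\gamma$ simply coincides with one of the two paths everywhere, and ensuring the notion of ``geodesic between two paths'' is well-defined when $\sigma_1$ and $\sigma_2$ cross each other rather than bounding a single simple region. I would address this by noting that two shortest increasing-chord paths of equal length cannot cross transversally without creating a shorter connection (a swapping/exchange argument), so after resolving any crossings I may assume they bound simple regions where the funnel/taut-string shortest-path theory applies directly and yields the strict length decrease.
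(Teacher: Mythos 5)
Your proposal is correct and follows essentially the same route as the paper: both take two distinct shortest paths with increasing chords, invoke Lemma~\ref{geodesic} to conclude that the geodesic enclosed between them also has increasing chords, and derive a contradiction from the geodesic being shorter. You spell out the strictness of the length decrease (and the degenerate configurations) in more detail than the paper, which simply asserts the existence of a shorter geodesic, but the argument is the same.
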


\begin{proof}
Assume there exist two distinct shortest $s$-$t$ paths with increasing chords $\sigma_1$ and $\sigma_2$ in a polygon $P$. Then, by Lemma~\ref{geodesic}, there exists a shorter path that is a geodesic path enclosed between $\sigma_1$ and $\sigma_2$.
\end{proof}

In the next theorem we show that if we subtract the union of dead regions $\mathcal{D}_t$ and $\mathcal{D}_s$ from a simple polygon $P$, then the shortest $s$-$t$ path in this space is also the shortest $s$-$t$ path with increasing chords.
This theorem is proven by contradiction. 
In more detail, we show that if there is a point where the normal property is violated, then no self-approaching path can exist.

\begin{thm}\label{thm:geodesic_dead}
Let $s$ and $t$ be two points in a simple polygon $P$. The shortest path between $s$ and $t$ in $P \setminus (\mathcal{D}_t \cup \mathcal{D}_s)$ is the shortest $s$-$t$ path with increasing chords in simple polygon $P$.
\end{thm}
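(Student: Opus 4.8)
The plan is to prove the two required inclusions: that the shortest path in $P \setminus (\mathcal{D}_t \cup \mathcal{D}_s)$ has increasing chords, and conversely that it is no longer than any path with increasing chords (so it is \emph{the} shortest such path by Theorem~\ref{unique}). Let $\gamma$ denote the geodesic in $P \setminus (\mathcal{D}_t \cup \mathcal{D}_s)$ and recall from Bose et al.\ that $\gamma$ consists of straight segments together with arcs along the boundaries $\partial \mathcal{D}_t$ and $\partial \mathcal{D}_s$. The key observation is that removing $\mathcal{D}_t$ enforces self-approachingness from $s$ to $t$, and removing $\mathcal{D}_s$ symmetrically enforces self-approachingness from $t$ to $s$; since a path has increasing chords exactly when it is self-approaching in both directions (Lemma~\ref{lem:ic_normal_prop}), establishing both one-directional properties yields the claim.

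First I would argue the forward direction by contradiction, mirroring the structure announced just before the statement. Suppose $\gamma$ does \emph{not} have increasing chords. By Corollary~\ref{lem:half_plane} (equivalently Lemma~\ref{lem:ic_normal_prop}) there is a point $p \in \gamma$ and a normal $h_p$ such that either $\gamma(p,t)$ fails to lie in $h_p^+$ or $\gamma(s,p)$ fails to lie in $h_p^-$. Take the first case; by the normal property for self-approaching paths (Lemma~\ref{lem:sa_normal_prop}), this means no self-approaching $s$-$t$ path can follow $\gamma$ past $p$ within the available region. The plan is to show that the violation forces $p$ (or a nearby point) into $\mathcal{D}_t$, because the defining property of $\mathcal{D}_t$ is precisely that points from which no self-approaching path to $t$ exists are excluded. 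This contradicts $p \in \gamma \subseteq P \setminus (\mathcal{D}_t \cup \mathcal{D}_s)$. The symmetric case with $\gamma(s,p) \not\subseteq h_p^-$ uses $\mathcal{D}_s$ and the self-approaching $t$-$s$ direction. Care is needed at bend points, where the normal is a double wedge, and at points lying on $\partial\mathcal{D}_t$ or $\partial\mathcal{D}_s$, where the earlier results tell us the normal only \emph{touches} the remaining subpath rather than crossing it; I would verify that touching is consistent with the half-plane condition and does not constitute a violation.

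For the converse direction I would show $\gamma$ is the shortest among paths with increasing chords. Let $\sigma$ be any $s$-$t$ path with increasing chords. Since $\sigma$ is self-approaching from $s$ to $t$, it cannot enter $\mathcal{D}_t$ (any point of $\mathcal{D}_t$ admits no self-approaching path to $t$, yet the suffix of $\sigma$ from that point would be one); symmetrically, being self-approaching from $t$ to $s$, it avoids $\mathcal{D}_s$. Hence $\sigma \subseteq P \setminus (\mathcal{D}_t \cup \mathcal{D}_s)$, and since $\gamma$ is the geodesic in that region, $|\gamma| \le |\sigma|$. Combined with the forward direction, $\gamma$ is a shortest $s$-$t$ path with increasing chords, and uniqueness follows from Theorem~\ref{unique}.

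The main obstacle I expect is the forward direction, specifically converting a local failure of the half-plane/normal condition on $\gamma$ into membership in a dead region. The subtlety is that $\mathcal{D}_t$ is defined by \emph{nonexistence} of a self-approaching path, which is a global condition, whereas the normal-property violation is local to $p$; bridging this gap requires invoking the precise characterization of $\partial\mathcal{D}_t$ from Bose et al.\ and reasoning that a geodesic respecting both dead regions automatically satisfies the touching condition wherever it runs along a dead-region boundary, so no genuine crossing violation can occur. Handling bend points and the two-sided nature of the increasing-chords condition simultaneously, without double-counting the constraints imposed by the two separate dead regions, is where I would concentrate the technical work.
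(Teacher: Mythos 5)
Your overall skeleton is right and matches the paper: the minimality half of your argument (any path with increasing chords is self-approaching in both directions, hence avoids both $\mathcal{D}_t$ and $\mathcal{D}_s$, hence is no shorter than the geodesic in $P \setminus (\mathcal{D}_t \cup \mathcal{D}_s)$) is exactly the paper's closing argument and is correct. The gap is in the forward direction, and you have correctly diagnosed where it is but not closed it. Your proposed mechanism --- a local violation of the normal property at $p$ forces $p$ into a dead region --- does not follow directly, precisely for the reason you flag: $\mathcal{D}_t$ is defined by the nonexistence of \emph{any} self-approaching path to $t$, whereas the violation only says that $\sigma$ itself fails the normal property at $p$. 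Bridging that gap is the entire content of the theorem, and the paper does not bridge it the way you suggest. Instead it chooses $p$ to be the \emph{last} point at which the normal property holds (this choice is load-bearing: several cases use that $s$ or $t$ lies on a known side of $h_p$ because the property is not yet violated at $p$), identifies the point $q$ where the offending normal \emph{touches} $\sigma$, and then runs a sixteen-case analysis over whether $q \in \sigma(s,p)$ or $q \in \sigma(p,t)$ and whether the segments containing $p$ and $q$ are straight, on $\partial\mathcal{D}_t$, or on $\partial\mathcal{D}_s$. Each case is killed by a different geometric contradiction: a self-approaching path forced to cross its own normal twice, two positive half-planes with empty intersection, $s$ and $p'$ separated into different connected components, a center of curvature on the wrong side, etc.

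A further oversimplification in your plan is the clean pairing of forward violations with $\mathcal{D}_t$ and backward violations with $\mathcal{D}_s$. The paper's cases show the two dead regions interact: for instance, a violation with $q \in \sigma(p,t)$ can be refuted only by examining the shortest self-approaching $q$-$s$ path and the boundary of $\mathcal{D}_s$ (Cases 10, 13, 16), and Case 11 needs an auxiliary point $w$ on $\partial\mathcal{D}_s$ even though the touching point $q$ is a polygon vertex. So the ``technical work'' you defer is not a routine invocation of the characterization of $\partial\mathcal{D}_t$ from Bose et al.; it is a genuinely intertwined case analysis over both dead regions, and without it the forward direction is unproven.
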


\begin{proof}
\begin{figure}
\centering
\begin{minipage}[t]{.45\textwidth}
    \centering
    \includegraphics[page=31]{figures/increasing_chords.pdf}
    \caption{Shortest path $\sigma$ (purple) where the normal property does not hold in $p'$, normal $h_p$ touches point $q$ that lies on subpath $\rho$ (blue) of $\sigma$.}
    \label{fig:finding_q}
\end{minipage}%
\hfill
\begin{minipage}[t]{.45\textwidth}
    \centering
    \includegraphics[page=32]{figures/increasing_chords.pdf}
    \caption{Shortest path $\sigma$ (purple) where the normal property does not hold in $p'$, normal $h'_b$ touches point $q$ that lies on subpath $\rho$ (blue) of $\sigma$.}
    \label{fig:finding_q_with_b}
\end{minipage}
\end{figure}

Assume that $\sigma$ does not have increasing chords, hence the normal property is violated at some point of $\sigma$.
Let point $p \in \sigma$ be the last point on $\sigma$ for which the normal property holds. Point $p$ is not guaranteed to exist. However, if $p$ exists, then there also exists point $p' \in \sigma(p, t)$ such that $p'$ lies in the $\epsilon$-neighborhood of $p$ for arbitrary small $\epsilon$ and the normal property does not hold in $p'$ (Fig.~\ref{fig:finding_q}).
Since in $p'$ the normal property does not hold, there is some subpath $\rho$ of $\sigma(s, p')$ or $\sigma(p', t)$ that lies in the positive or negative half-plane defined by the normal through $h_p'$, respectively.
Furthermore, the normal $h_p$ touches $\sigma$ at some point $q \in \rho$.
However, if $p$ does not exist, there exists a bend point $b$ such that there are two lines $h'_b, h''_b \in h_b$ among the set of lines in the normal $h_b$, such that $h'_b$ touches $\sigma$ at some point $q \in \sigma$ and $h''_b$ intersects $\sigma$ (Fig.~\ref{fig:finding_q_with_b}). All cases where $p$ does not exist are simply analogous to the cases where $p$ exists. Therefore, we will only cover the cases where $p$ exists.

We assume, without loss of generality, that line segment $\overline{pq}$ is horizontal, $p$ lies to the right of $q$, and $\sigma(s, p)$ lies above $h_p$.
Let $e$ be the segment of $\sigma$ containing $p$ and $f$ be the segment of $\sigma$ containing $q$.
We must consider the cases where $q \in \sigma(s, p)$ and $q \in \sigma(p, t)$.
Furthermore, since $\sigma$ is a geodesic in $P \setminus (\mathcal{D}_t \cup \mathcal{D}_s)$ the segments $e$ and $f$ can be straight line segments, or boundaries of a dead region of $\mathcal{D}_t$ or $\mathcal{D}_s$.
If $f$ is a straight line segment, then $q$ must be an end point of $f$ and thus be a vertex of polygon $P$.
Here, we will only cover the case where $q \in \sigma(s, p)$, $e$ is a boundary of $\mathcal{D}_t$, and $q$ is a vertex of $P$.
For each possible combination of the segment types of $e$ and $f$ and the location of $q$ on the path $\sigma$ we show that these cases are not possible by contradiction.

Once we have shown that no point exists where the normal property is violated, $\sigma$ indeed has increasing chords by Lemma~\ref{lem:sa_normal_prop}.
Furthermore, $\sigma$ must be the shortest path with increasing chords. Any path that is shorter than $\sigma$ must go through either of the dead regions $\mathcal{D}_t$ or $\mathcal{D}_s$. 
Therefore, any path shorter than $\sigma$ cannot have increasing chords and $\sigma$ must be the shortest path with increasing chords.

\begin{figure}
\centering
\begin{subfigure}[t]{.45\textwidth}
  \centering
  \includegraphics[page=1]{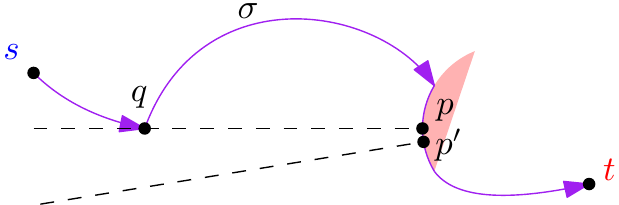}
  \caption{The point of curvature of $\sigma$ at point $p$ lies to the right of $p$.}
  \label{fig:curvature_left}
\end{subfigure}%
\hfill
\begin{subfigure}[t]{.45\textwidth}
  \centering
  \includegraphics[page=2]{figures/geodesic_dead_regions.pdf}
  \caption{The point of curvature $v$ of $\sigma$ at point $p$ lies to the left of $p$. The green path is the shortest self-approaching $p$-$t$ path.}
  \label{fig:q_straight_p_dt_right}
\end{subfigure}
\\
\bigskip
\begin{subfigure}[t]{.6\textwidth}
  \centering
  \includegraphics[page=22]{figures/geodesic_dead_regions.pdf}
  \caption{The point of curvature $v$ of $\sigma$ at point $p$ lies to the left of $p$. The orange path is the shortest self-approaching $v$-$s$ path, that intersects with $h_p$ in $w$ and $w'$.}
  \label{fig:q_straight_p_dt_or}
\end{subfigure}%
\caption{Geodesic $\sigma$ (purple) where point $q \in \sigma(s, p)$, $p$ lies on a boundary of $\mathcal{D}_t$ (red), and $q$ is a vertex of $P$.}
\label{fig:q_straight_p_dt}
\end{figure}
\paragraph*{Point \bm{$q \in \sigma(s, p)$}, \bm{$e$} is a boundary of \bm{$\mathcal{D}_t$}, and \bm{$q$} is a vertex of \bm{$P$} (Fig.~\ref{fig:q_straight_p_dt})}

\noindent The center of curvature of $\sigma$ at $p$ must lie to the left of $p$. Otherwise the normal $h_{p'}$ to $\sigma$ cannot intersect $\sigma(s, p)$, as is depicted in Fig.~\ref{fig:curvature_left}.
Let $\pi_{pt}$ be the shortest self-approaching path from $p$ to $t$.
Segment $e$ is part of $\mathcal{D}_t$, therefore there must be a point $v \in \pi_{pt}$ touched by normal $h_p$ (Fig.~\ref{fig:q_straight_p_dt_right}).
Because $\pi_{pt}$ goes through $v$, $h_v$ is perpendicular to $h_p$ (if $v$ is a bend point, then there must be a line which is perpendicular $h_p$ in the set of normals $h_v$). 
Furthermore, $\pi_{pt}(v, t)$ lies completely in the positive half-plane $h_v^+$ by the half-plane property.
We will now show that in the region between $\overline{pv}$ and $\pi_{pt}(p, v)$ there are vertices of $P$. 
The straight line segment $\overline{pv}$ concatenated to $\pi_{pt}(v,t)$ would be self-approaching. However, $p'$ lies below $h_p$, thus polygon $P$ must intersect with $\overline{vp}$.
Therefore, the shortest self-approaching $v$-$s$ path $\pi_{vs}$ must first intersect or touch $h_p$ to the right of $v$ at point $w$ and later to the left of $v$ at point $w'$ (Fig.~\ref{fig:q_straight_p_dt_or}). Thus, $|vw'| < |ww'|$ which contradicts the self-approaching property of $\pi_{vs}$.
Hence, this case is not possible if the geodesic between $s$ and $t$ exists.

\begin{figure}
\centering
\begin{minipage}[t]{.45\textwidth}
  \centering
  \includegraphics[page=3]{figures/geodesic_dead_regions.pdf}
  \caption{Point $q \in \sigma(s, p)$, $p$ lies on a boundary of $\mathcal{D}_t$ (red), and $q$ lies on a boundary of $\mathcal{D}_s$ (blue). The green path is the shortest self-approaching $p$-$t$ path.}
  \label{fig:q_ds_p_dt}
\end{minipage}%
\hfill
\begin{minipage}[t]{.45\textwidth}
  \centering
  \includegraphics[page=4]{figures/geodesic_dead_regions.pdf}
  \caption{Point $q \in \sigma(s, p)$, and both $p$ and $q$ lie on a boundary of $\mathcal{D}_t$ (red). The green path is the shortest self-approaching $q$-$t$ path.}
  \label{fig:q_dt_p_dt}
\end{minipage}
\end{figure}

\paragraph*{Case 2: \bm{$q \in \sigma(s, p)$}, \bm{$e$} is a boundary of \bm{$\mathcal{D}_t$}, and \bm{$f$} is a boundary of \bm{$\mathcal{D}_s$} (Fig.~\ref{fig:q_ds_p_dt})}

\noindent This case is analogous to Case 1 (Fig.~\ref{fig:q_straight_p_dt}).

\paragraph*{Case 3: \bm{$q \in \sigma(s, p)$}, \bm{$e$} is a boundary of \bm{$\mathcal{D}_t$}, and \bm{$f$} is a boundary of \bm{$\mathcal{D}_t$} (Fig.~\ref{fig:q_dt_p_dt})}

\noindent Since $q$ is touched by $h_p$ and $q$ lies on a boundary of $\mathcal{D}_t$, normal $h_q$ is perpendicular to $h_p$. Furthermore, consider shortest self-approaching $q$-$t$ path $\pi_{qt}$. Because $q$ lies on a boundary of $\mathcal{D}_t$ there is a point $v \in \pi_{qt}$ touched by $h_q$ with $h_v \bot h_q$. Path $\pi_{qt}(p, t)$ or $\pi_{qt}(v, t)$ must lie both in $h_v^+$ and in $h_p^+$, however the intersection of these half-planes is empty; $h_v^+ \cap h_p^+ = \emptyset$. Hence, path $\pi_{qt}$ cannot exist.

\begin{figure}
\centering
\begin{minipage}[t]{.45\textwidth}
  \centering
  \includegraphics[page=5]{figures/geodesic_dead_regions.pdf}
  \caption{Point $q \in \sigma(s, p)$ and $p$ lies on a straight line segment (orange).}
  \label{fig:p_straigt}
\end{minipage}%
\hfill
\begin{minipage}[t]{.45\textwidth}
  \centering
  \includegraphics[page=6]{figures/geodesic_dead_regions.pdf}
  \caption{Point $q \in \sigma(s, p)$, $p$ lies on a boundary of $\mathcal{D}_s$ (blue), and $q$ is a vertex of $P$.}
  \label{fig:q_straight_p_ds}
\end{minipage}
\end{figure}

\paragraph*{Case 4: \bm{$q \in \sigma(s, p)$} and \bm{$e$} is a straight line segment (Fig.~\ref{fig:p_straigt})}~

\noindent Since the normals $h_p$ and $h_{p'}$ are parallel it is impossible for $h_{p'}$ to be able to intersect with $\sigma(s, p)$.

\paragraph*{Case 5: \bm{$q \in \sigma(s, p)$}, \bm{$e$} is a boundary of \bm{$\mathcal{D}_s$}, and \bm{$q$} is a vertex of \bm{$P$} (Fig.~\ref{fig:q_straight_p_ds})}~

\noindent Normal $h_p$ touches $\sigma(s, p)$ at point $q$. Point $q$ does not lie in the positive half-plane $h_{p'}^+$ of the shortest self-approaching $p'$-$s$ path $\pi_{p's}$. Since $q$ is a vertex of $P$, $h_{p'}$ intersects with $P$. Therefore, $s$ and $p'$ must lie in different connected components in the intersection of $P$ and the positive half-plane $h_{p'}^+$ of the path $\pi_{p's}$. Hence, no $p'$-$s$ path can exist including $\pi_{p's}$.

\begin{figure}
\centering
\begin{minipage}[t]{.45\textwidth}
  \centering
  \includegraphics[page=7]{figures/geodesic_dead_regions.pdf}
  \caption{Point $q \in \sigma(s, p)$, and both $p$ and $q$ lie on a boundary of $\mathcal{D}_s$ (blue).}
  \label{fig:q_ds_p_ds}
\end{minipage}%
\hfill
\begin{minipage}[t]{.45\textwidth}
  \centering
  \includegraphics[page=8]{figures/geodesic_dead_regions.pdf}
  \caption{Point $q \in \sigma(s, p)$, $p$ lies on a boundary of $\mathcal{D}_s$ (blue), and $q$ lies on a boundary of $\mathcal{D}_t$ (red). The green path is the shortest self-approaching $q$-$t$ path.}
  \label{fig:q_dt_p_ds}
\end{minipage}
\end{figure}

\paragraph*{Case 6: \bm{$q \in \sigma(s, p)$}, \bm{$e$} is a boundary of \bm{$\mathcal{D}_s$}, and \bm{$q$} is a boundary of \bm{$\mathcal{D}_s$} (Fig.~\ref{fig:q_ds_p_ds}).}~

\noindent Normal $h_p$ touches $\sigma(s, p)$ at point $q$. Point $q$ does not lie in the positive half-plane $h_{p'}^+$ of the shortest self-approaching $p'$-$s$ path $\pi_{p's}$. Since $q$ lies on a boundary of $\mathcal{D}_s$, $h_{p'}$ intersects with this boundary. Therefore, $s$ and $p'$ must lie in different connected components in the intersection of $P \setminus \mathcal{D}_s$ and the positive half-plane $h_{p'}^+$ of the path $\pi_{p's}$. Hence, $\pi_{p's}$ cannot exist.

\paragraph*{Case 7: \bm{$q \in \sigma(s, p)$}, \bm{$e$} is a boundary of \bm{$\mathcal{D}_s$}, and \bm{$f$} is a boundary of \bm{$\mathcal{D}_t$} (Fig.~\ref{fig:q_dt_p_ds})}

\noindent This case is analogous to Case 3 (Fig.~\ref{fig:q_dt_p_dt}).

\begin{figure}
\centering
\begin{minipage}[t]{.45\textwidth}
  \centering
  \includegraphics[page=9]{figures/geodesic_dead_regions.pdf}
  \caption{Point $q \in \sigma(p, t)$, $p$ lies on a boundary of $\mathcal{D}_t$ (red), and $q$ is a vertex of $P$.}
  \label{fig:p_dt_q_straight}
\end{minipage}%
\hfill
\begin{minipage}[t]{.45\textwidth}
  \centering
  \includegraphics[page=10]{figures/geodesic_dead_regions.pdf}
  \caption{Point $q \in \sigma(p, t)$, and both $p$ and $q$ lie on a boundary of $\mathcal{D}_t$ (red).}
  \label{fig:p_dt_q_dt}
\end{minipage}
\end{figure}

\paragraph*{Case 8: \bm{$q \in \sigma(p, t)$}, \bm{$e$} is a boundary of \bm{$\mathcal{D}_t$}, and \bm{$q$} is a vertex of \bm{$P$} (Fig.~\ref{fig:p_dt_q_straight})}

\noindent This case is analogous to Case 5 (Fig.~\ref{fig:q_straight_p_ds}).

\paragraph*{Case 9: \bm{$q \in \sigma(p, t)$}, \bm{$e$} is a boundary of \bm{$\mathcal{D}_t$}, and \bm{$f$} is a boundary of \bm{$\mathcal{D}_t$} (Fig.~\ref{fig:p_dt_q_dt})}

\noindent This case is analogous to Case 6 (Fig.~\ref{fig:q_ds_p_ds}).

\begin{figure}
\centering
\begin{minipage}[t]{.45\textwidth}
  \centering
  \includegraphics[page=11]{figures/geodesic_dead_regions.pdf}
  \caption{Point $q \in \sigma(p, t)$, $p$ lies on a boundary of $\mathcal{D}_t$ (red), and $q$ lies on a boundary of $\mathcal{D}_s$ (blue). The orange path is the shortest self-approaching $q$-$s$ path.}
  \label{fig:p_dt_q_ds}
\end{minipage}%
\hfill
\begin{minipage}[t]{.45\textwidth}
  \centering
  \includegraphics[page=12]{figures/geodesic_dead_regions.pdf}
  \caption{Point $q \in \sigma(p, t)$, both $p$ and $q
  $ lie on a straight line segment (orange), and point $w$ lies on boundary $d$ of $\mathcal{D}_s$ (blue) with $h_p \parallel h_w$. The green path is the shortest self-approaching $p$-$t$ path.}
  \label{fig:p_straight_q_straight}
\end{minipage}
\end{figure}

\paragraph*{Case 10: \bm{$q \in \sigma(p, t)$}, \bm{$e$} is a boundary of \bm{$\mathcal{D}_t$}, and \bm{$f$} is a boundary of \bm{$\mathcal{D}_s$} (Fig.~\ref{fig:p_dt_q_ds})}

\noindent Point $s$ must lie above $h_p$ as in $p$ the normal property is not yet violated. Furthermore, the point of curvature $v$ of $\sigma$ at point $q$ must lie below $h_q$. The shortest self-approaching $q$-$s$ path $\pi_{qs}$ must travel through $v$ and because $h_q$ touches this point, there is a normal $h_v$ that is parallel to $h_p$. Point $s$ cannot lie above $h_p$ and below $h_v$ at the same time, hence this construction cannot exist.

\begin{figure}
    \centering
    \includegraphics[page=19]{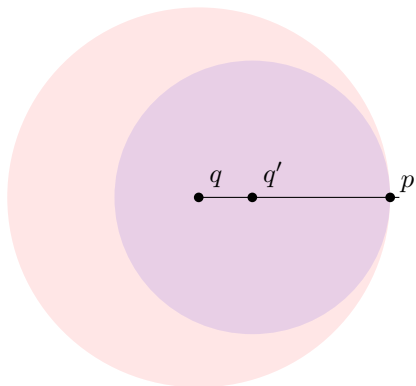}
    \caption{The disc with radius $|pq'|$ centered at point $q'$ which lies on the straight line segment $\overline{pq}$ is fully contained in the disc with radius $|pq|$ centered at $q$.}
    \label{fig:disc_containment}
\end{figure}

\paragraph*{Case 11: \bm{$q \in \sigma(p, t)$}, \bm{$e$} is a straight line segment, and \bm{$q$} a vertex of \bm{$P$} (Fig.~\ref{fig:p_straight_q_straight})}

\noindent The normal $h_p$ touches $\sigma(p, t)$ at point $q$ which is a vertex of $P$. 
The shortest self-approaching $p$-$t$ path $\pi_{pt}$ must intersect or touch $h_p$ at least once in point $q'$ that lies between $p$ and $q$. 
By definition we know that $|pq'| \geq |vq'|$ for every point $v$ of $\pi_{pt}(p, q')$.
Therefore, $|pq| \geq |vq|$, i.e.\ every point of the subpath $\pi_{pt}(p, q')$ lies in a disc centered at $q$ with radius $|pq|$ (Fig.~\ref{fig:disc_containment}).
Thus, the straight line segment $e$ is not part of $\pi_{pt}$ and $e$ is a part of $\sigma$ due to a boundary $d$ of $\mathcal{D}_s$.
Segment $e$ can only be part of $\sigma$ if there is a point $w \in d \subset \sigma$ with a normal $h_w$ that is parallel to $h_p$ and $\sigma(p, w)$ only contains straight line segments and boundaries of $\mathcal{D}_s$.
Otherwise, at the end of $e$ there is either a bend point where $\sigma$ turns to the right or $e$ is tangent to a boundary of $\mathcal{D}_t$. 
In either option, no self-approaching $p$-$t$ path can exist.
Since $w$ lies on a boundary of $\mathcal{D}_s$, there must be a point $w'$ on the shortest self-approaching $w$-$s$ path $\pi_{ws}$ touched by $h_w$.
Therefore, $\sigma(s, w)$ must intersect or touch $h_w$ too. However, $\sigma(s,w)$ cannot intersect $h_w$ as in $h_p$ the normal property is not violated, thus $\sigma$ cannot exist.

\begin{figure}
\centering
\begin{minipage}[t]{.45\textwidth}
  \centering
  \includegraphics[page=13]{figures/geodesic_dead_regions.pdf}
  \caption{Point $q \in \sigma(p, t)$, $p$ lies on a straight line segment (orange), and $q$ lies on a boundary of $\mathcal{D}_t$ (red). The green path is the shortest self-approaching $p$-$t$ path.}
  \label{fig:p_straight_q_dt}
\end{minipage}%
\hfill
\begin{minipage}[t]{.45\textwidth}
  \centering
  \includegraphics[page=14]{figures/geodesic_dead_regions.pdf}
  \caption{Point $q \in \sigma(p, t)$, $p$ lies on a straight line segment (orange), and $q$ lies on a boundary of $\mathcal{D}_s$ (blue). The orange path is the shortest self-approaching $q$-$s$ path.}
  \label{fig:p_straight_q_ds}
\end{minipage}
\end{figure}

\paragraph*{Case 12: \bm{$q \in \sigma(p, t)$}, \bm{$e$} is a straight line segment, and \bm{$f$} a boundary of \bm{$\mathcal{D}_t$} (Fig.~\ref{fig:p_straight_q_dt})}

\noindent This case is analogous to Case 11 (Fig.~\ref{fig:p_straight_q_straight}).

\paragraph*{Case 13: \bm{$q \in \sigma(p, t)$}, \bm{$e$} is a straight line segment, and \bm{$f$} a boundary of \bm{$\mathcal{D}_s$} (Fig.~\ref{fig:p_straight_q_ds})}

\noindent Point $s$ must lie above $h_p$ as in $p$ the normal property is not yet violated. Furthermore, the center of curvature $v$ of $\sigma$ at point $q$ must lie below $h_p$. The shortest self-approaching $q$-$s$ path $\pi_{qs}$ must travel through $v$ and because $h_q$ touches this point, there is a normal $h_v$ that is parallel to $h_p$. Point $s$ cannot lie above $h_p$ and below $h_v$ at the same time, hence this construction cannot exist.

\begin{figure}
\centering
\begin{subfigure}[t]{.45\textwidth}
  \centering
  \includegraphics[page=15]{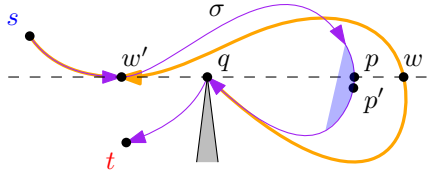}
  \caption{The point of curvature of $\sigma$ at $p$ lies in the direction of $q$. The orange path is the shortest self-approaching $q$-$s$ path.}
  \label{fig:p_ds_q_straight_left}
\end{subfigure}%
\hfill
\begin{subfigure}[t]{.45\textwidth}
  \centering
  \includegraphics[page=18]{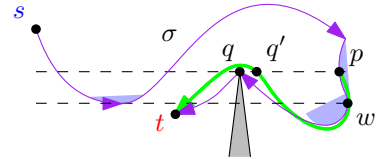}
    \caption{Point of curvature of $\sigma$ at $p$ lies in the opposite direction of $q$. The green path is the shortest self-approaching $p$-$t$ path.}
    \label{fig:p_ds_q_straight_right}
\end{subfigure}%
\caption{Point $q \in \sigma(p, t)$, $q$ is a vertex of $P$, and $p$ lies on a boundary of $\mathcal{D}_s$ (blue).}
\label{fig:p_ds_q_straight}
\end{figure}

\paragraph*{Case 14: \bm{$q \in \sigma(p, t)$}, \bm{$e$} is a boundary of \bm{$\mathcal{D}_s$}, and \bm{$q$} is a vertex of \bm{$P$} (Fig.~\ref{fig:p_ds_q_straight})}

\noindent There are two cases that need to be considered, either the point $v$ of $\sigma$ at point $p$ lies in the direction of $q$ with respect to $p$ or not.
First we consider the case where $v$ lies in the direction of $q$ (Fig.~\ref{fig:p_ds_q_straight_left}).
Since $q$ lies in the negative half-plane $h_{p'}^-$, the center of curvature of $\sigma$ at point $p$ must lie to the left of $q$. 
Therefore, the shortest self-approaching $q$-$s$ path $\pi_{qs}$ must first intersect or touch $h_p$ to the right of $q$ at point $w$ and later to the left at point $w'$. Thus, $|qw'| < |ww'|$ which contradicts the self-approaching property of $\pi_{vs}$.

The case where $v$ does not lie in the direction of $q$ with respect to $p$ (Fig.~\ref{fig:p_ds_q_straight_right}) is analogous to Case 11 (Fig.~\ref{fig:p_straight_q_straight}).

\begin{figure}
\centering
\begin{minipage}[t]{.45\textwidth}
  \centering
  \includegraphics[page=16]{figures/geodesic_dead_regions.pdf}
  \caption{Point $q \in \sigma(p, t)$, $q$ lies on a boundary of $\mathcal{D}_t$ (red), and $p$ lies on a boundary of $\mathcal{D}_s$ (blue).}
  \label{fig:p_ds_q_dt}
\end{minipage}%
\hfill
\begin{minipage}[t]{.45\textwidth}
  \centering
  \includegraphics[page=17]{figures/geodesic_dead_regions.pdf}
  \caption{Point $q \in \sigma(p, t)$, $q$ lies on a boundary of $\mathcal{D}_s$ (blue), and $p$ lies on a boundary of $\mathcal{D}_s$ (blue). The orange path is the shortest self-approaching $q$-$s$ path.}
  \label{fig:p_ds_q_ds}
\end{minipage}
\end{figure}

\paragraph*{Case 15: \bm{$q \in \sigma(p, t)$}, \bm{$e$} is a boundary of \bm{$\mathcal{D}_s$}, and \bm{$f$} is a boundary of \bm{$\mathcal{D}_t$} (Fig.~\ref{fig:p_ds_q_dt})}

\noindent This case is analogous to Case 14 (Fig.~\ref{fig:p_ds_q_straight}).

\paragraph*{Case 16: \bm{$q \in \sigma(p, t)$}, \bm{$e$} is a boundary of \bm{$\mathcal{D}_s$}, and \bm{$f$} is a boundary of \bm{$\mathcal{D}_s$} (Fig.~\ref{fig:p_ds_q_ds})}

\noindent Point $s$ must lie above $h_p$ as in $p$ the normal property is not yet violated. Furthermore, the point of curvature $v$ of $\sigma$ at point $q$ must lie below $h_q$. The shortest self-approaching $q$-$s$ path $\pi_{qs}$ must travel through $v$ and because $h_q$ touches this point, there is a normal $h_v$ that is parallel to $h_p$. Point $s$ cannot lie above $h_p$ and below $h_v$ at the same time, hence this construction cannot exist.
\end{proof}
\begin{figure}[th]
     \centering
     \includegraphics[page=12]{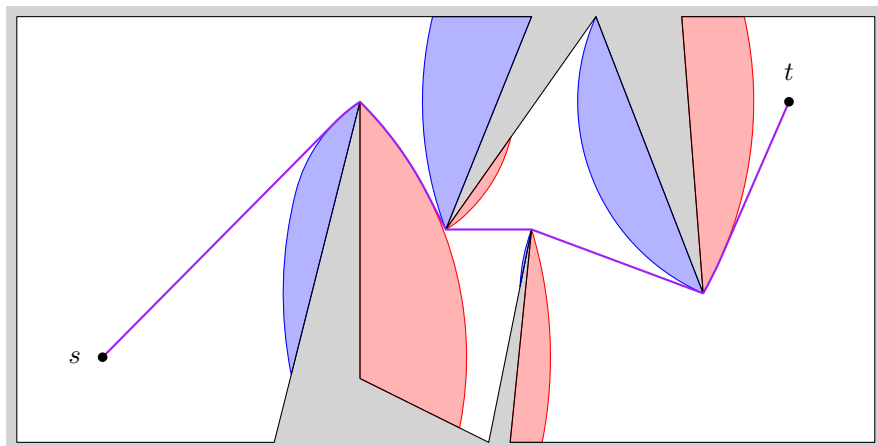}
     \caption{Polygon $P$ with dead regions $\mathcal{D}_s$ (red) and $\mathcal{D}_t$ (blue), the geodesic (purple) of the remaining area is the shortest path with increasing chords.}
     \label{fig:geo_around_dead}
\end{figure}

Using Theorem \ref{thm:geodesic_dead}, the shortest $s$-$t$ path with increasing chords in a polygon $P$ can be found by subtracting the dead regions $\mathcal{D}_t$ and $\mathcal{D}_s$ from polygon $P$ (Fig.~\ref{fig:geo_around_dead}). Therefore, this shortest path with increasing chords can be found in a similar manner to finding the shortest self-approaching $s$-$t$ path as described by Bose et al. \cite{self-approaching_simple_polygons}. As is the case for the algorithm for the shortest self-approaching path, the shortest path with increasing chords can only be found if we assume that transcendental equations can be solved or that an approximation of the path will be calculated.

\section{Conclusions}

In this paper, we showed that the shortest $s$-$t$ path with increasing chords in a simple polygon is unique. 
Furthermore, we showed that the shortest $s$-$t$ path in a polygon minus the dead regions of $s$ and $t$ is the shortest path with increasing chords. 
Therefore, the algorithm for finding the shortest path with increasing chords is similar to finding a shortest self-approaching path.

A natural direction for future research is the question of whether an efficient algorithm exists that can find a shortest $s$-$t$ path with increasing chords in a polygon with holes. The possibility remains open that this problem is NP-hard.

\bibliography{references}

\newpage

\end{document}